\newcommand{\op}[1]{\textsl{#1}}
\newtheorem{definition}{Definition}
\newtheorem{theorem}{Theorem}
\newtheorem{lemma}{Lemma}
  \providecommand\BibTeX{{%
    \normalfont B\kern-0.5em{\scshape i\kern-0.25em b}\kern-0.8em\TeX}}}
\title{Asynchrony-Resilient Sleepy Total-Order Broadcast Protocols}
\author{Francesco D'Amato\\
  Ethereum Foundation\\
  \url{francesco.damato@ethereum.org}
  \and Giuliano Losa\\
  Stellar Development Foundation\\
  \url{giuliano@stellar.org}
  \and Luca Zanolini\\
  Ethereum Foundation\\
  \url{luca.zanolini@ethereum.org}
}
\date{}
\begin{document}
\maketitle
\begin{abstract}\noindent
  Dynamically available total-order broadcast (TOB) protocols tolerate fluctuating participation, e.g., as high as 99\% of their participants going offline, which is especially useful in permissionless blockchain environments.
  However, dynamically available TOB protocols are synchronous protocols, and they lose their safety guarantees during periods of asynchrony.
    This is a major issue in practice.

    In this paper, we propose a simple but effective mechanism for tolerating bounded periods of asynchrony in dynamically available TOB protocols that ensure safety deterministically.
    We propose to trade off assumptions limiting the online/offline churn rate in exchange for tolerating bounded asynchronous periods through the use of a configurable message-expiration period.

    In practice, this allows picking a small synchrony bound~$\delta$, and therefore obtain a fast protocol in the common case, knowing that the protocol tolerates occasional periods of duration at most~$\pi>\delta$ during which the bound does not hold.
    We show how to apply this idea to a state-of-the-art protocol to make it tolerate bounded periods of asynchrony.
\end{abstract}

\section{Introduction}

At the heart of a system like Ethereum is a dynamically available, Byzantine fault-tolerant (BFT) total-order broadcast (TOB) protocol that allows participants to propose new blocks and agree on a growing blockchain.
Byzantine fault-tolerance means that the protocol tolerates attacks in which some participants misbehave, i.e., they are controlled by an adversary and maliciously deviate from the protocol.
Dynamic availability means that the protocol can handle participants going offline or coming back online  at any time --- even 99\% of them --- as long as a sufficient fraction (more than half in Ethereum) of the online participants remain well behaved.

In Ethereum, dynamic availability is crucial to survive a failure scenario in which one or more of Ethereum's consensus clients (which are different implementations of Ethereum's TOB protocol) suffer a software bug and crash, taking down a large fraction\footnote{In May 2023, roughly 60\% of Ethereum's consensus clients went offline for about 25 minutes due to a software bug; Ethereum's dynamically available chain nevertheless continued growing normally.} of the participants at once.
In contrast, traditional BFT protocols (synchronous or partially synchronous) get stuck when participation drops below their fixed (usually 1/2 or 2/3) quorum threshold.

It is unfortunately impossible to achieve consensus, a prerequisite for TOB, in the dynamically-available setting under partial synchrony~\cite{DBLP:conf/fc/LewisPyeR23}\cite[Theorem 7.1]{lewispyePermissionlessConsensus2024}.
Thus, dynamically available TOB protocols are synchronous protocols, meaning that they depend on a known, system-wide upper bound $\delta$ on communication delay and clock skew (assuming local computation takes negligible time).
This allows simulating a round-by-round model (in the vein of the basic round model of Dwork, Lynch, and Stockmeyer~\cite{DBLP:journals/jacm/DworkLS88}), such that, each round, every message sent by an online and well-behaved participant is received by all the participants that are online and well-behaved in the next round.

All existing dynamically available TOB protocols are formulated in variants of this round-by-round model, such as the sleepy model~\cite{sleepy}, that make additional assumptions such as cryptographic assumptions, assumptions about the ratio of ill-behaved to well-behaved online participants, assumptions about eventually stable participation, etc.

Unfortunately, in practice, when the synchrony bound $\delta$ is violated, the message-delivery guarantees of the round-by-round model cannot be guaranteed anymore, and existing dynamically available TOB protocols lose their safety and liveness guarantees.
Therefore, practical deployments must choose an excessively conservative, i.e., large, synchrony bound $\delta$.
This is a problem because the latency and throughput of dynamically available TOB protocols are proportional and inversely proportional, respectively, to $\delta$.

In this work, we propose a methodology to modify existing dynamically available TOB protocols in order to obtain protocols that have the following desirable properties:
\begin{enumerate}
        \item They tolerate asynchronous periods, i.e., periods during which message delivery is under full control of the adversary, of maximum duration $\pi$, which is a parameter; and
        \item They match the latency and throughput of the original protocol when the synchrony bound $\delta$ holds.
\end{enumerate}

In practice, this allows picking a small synchrony bound~$\delta$, and therefore to obtain a fast protocol in the common case, knowing that the protocol tolerates occasional periods of duration at most $\pi>\delta$ during which the synchrony bound does not hold.
With existing dynamically available TOB protocols, maintaining safety under those assumptions would require setting $\delta=\pi$, which would significantly slow down the protocol.

The methodology we present applies to protocols that ensure safety deterministically.
Examples are the protocols of Momose and Ren~\cite{DBLP:conf/ccs/Momose022}, Malkhi, Momose, and Ren~\cite{DBLP:journals/iacr/MalkhiMR22,DBLP:conf/ccs/MalkhiM023}, Gafni and Losa~\cite{gl}, and D'Amato and Zanolini~\cite{DBLP:journals/corr/abs-2310-11331}.
The case of protocols that ensure probabilistic safety is treated separately by D'Amato and Zanolini~\cite{rlmd}.

To build some intuition as to why existing dynamically available TOB protocols lose their safety guarantees when message-delivery guarantees are violated, let us consider the 1/3-resilient, dynamically available TOB protocol of Malkhi, Momose, and Ren~\cite{DBLP:journals/iacr/MalkhiMR22} (the \emph{MMR protocol}, further detailed in Section~\ref{sec:mmr}).
The MMR protocol tolerates less than a fraction $\beta=1/3$ of the online participants being malicious in each round.
The protocol consists of a sequence of views of~2 rounds each.
In each view, participants may introduce new values in the first round of the view and vote to extend the blockchain in the second round, which we call the \emph{decision round}.
Each participant decides on a value~$b$ when more than a fraction $\alpha=1-\beta=\frac{2}{3}$ of the votes it receives in a decision round are for~$b$ (note that this threshold is relative to the number of votes that a participant receives, and not the total number of participants, because there is no telling how many participants are online in a given round).

In order to guarantee progress in the face of changing participation, in each round $r$, each participant only uses votes cast in the same round $r$.
Otherwise, a drop in participation might stall the protocol because the currently-online participants would not be numerous enough, compared to the number of previously online participants, to reach the decision threshold.
For example, if participation drops from 100 to 10 participants from one round~$r$ to the next round~$r+1$, then the votes of the participants that are online in round $r+1$ can obviously not account for~$\frac{2}{3}$ of participants' latest votes over rounds~$r$ and~$r+1$. 

Regrettably, protocols that only use votes from the current round lose all safety guarantees if there are periods of asynchrony during which message-delivery is fully under adversarial control.
For example, suppose that the network delivers only adversarial messages in the decision round of the MMR protocol.
Trivially, if the adversary sends only votes for~$b$ to a participant~$p_i$ and only votes for~$b'\neq b$ to another participant $p_j$, then $p_i$ decides~$v$ because it receives unanimous votes for~$b$ and, similarly, $p_j$ decides~$b'$.
This violates the agreement property of total-order broadcast.

To sum up, on the one hand it seems that, each round, dynamically available protocols like the MMR protocol must only use votes cast in the current round or they lose progress guarantees in case of fluctuation in the participation level from one round to the next.
On the other hand, using only votes cast in the current round means losing safety in asynchronous rounds.
In this paper, we offer a solution to this conundrum: We observe that we can use the most recent votes that each participant casts over a fixed number of previous rounds, called the \emph{expiration period}, without losing safety or progress guarantees if we fix a maximum \emph{churn rate} $\gamma$ (roughly, the fraction of participants online during the last expiration period that are allowed to go offline) and set the maximum failure ratio to be a function of $\gamma$ as depicted in~\Cref{fig:sync-failure-ratio}.
In turn, using votes from a fixed expiration period allows tolerating periods of asynchrony shorter than the expiration period.

Although using messages from multiple rounds allows tolerating bounded asynchrony, it comes at a cost:
Even during synchrony, to ensure safety and progress of the protocol, we must introduce bounds on the fraction of participants that drop offline after participating at some point during the last expiration period.
Otherwise, safety can be violated because a consensus decision may be witnessed by too few participants, compared to the number of participants that have been active during the expiration period, and then overridden in the following rounds.
Progress may also be hampered, as old votes may be too numerous and prevent votes for a new value from reaching the decision threshold. 

In this paper, we address all these challenges and introduce a simple but effective methodology to augment dynamically available protocols, enabling them to withstand bounded periods of asynchrony. The remainder of this paper is as follows: Section~\ref{sec:model} outlines our system model, provides definitions useful to the subsequent sections, and establishes specific conditions concerning the adversary. The core contribution of this work is detailed in Section~\ref{sec:ar-tob}. Here, we present an overview of our proposed methodology, revisit a dynamically available TOB protocol that serves as a practical example for applying our methodology, and then demonstrate the application of our approach, complete with supporting proofs. Discussions on related work are presented in Section~\ref{sec:related}, and conclusions are drawn in Section~\ref{sec:conclusions}.

\begin{figure}[h]
    \caption{Allowable failure ratio $\tilde{\beta}_{2/3}$ to ensure progress during synchrony with an expiration period of $\eta$ rounds and a churn rate of $\gamma$ per $\eta$ rounds. We assume an algorithm using a decision threshold of $1-\beta=\frac{2}{3}$. If participation is static~(i.e.\ $\gamma=0$), the maximum tolerable failure ratio is~$\frac{1}{3}$, and this matches the upper bound for a decision threshold of~$\frac{2}{3}$. At a drop-off rate of $\gamma\geq \frac{1}{3}$, the system may stall even without failures. As we explain in~\Cref{sec:adversary}, in general we must have $\tilde{\beta}=\frac{\beta - \gamma}{\gamma(\beta - 2) + 1}$.}
    \centering
    \scalebox{0.6}{\input{plot.tex}}
    \label{fig:sync-failure-ratio}
\end{figure}

\section{Model and Definitions}
\label{sec:model}

\subsection{System Model}
\label{sec:system-model}

\subsubsection*{Processes}
We consider a system of $n$ \emph{processes} $\mathcal{P} =\{p_1,\ldots, p_n\}$ in a message-passing system with an underlying peer-to-peer dissemination protocol (e.g., a gossip protocol).
Each process is assigned a protocol to follow, consisting of a collection of programs with instructions for all processes.
Processes are divided into \emph{well-behaved} processes and \emph{Byzantine} processes.
Well-behaved processes follow their assigned protocol and send the messages stipulated by it, while Byzantine processes are controlled by an adversary which can make them send arbitrary messages.
Messages sent by processes come with an unforgeable signature, and messages without a valid signature are discarded.

\subsubsection*{Time and Network}
An execution of the system proceeds in an infinite sequence of rounds \(1, 2, 3, \ldots\)
The system is characterized by a known upper bound \(\delta\) that governs the message delay.
Under the assumption that local computations occur instantaneously, it is feasible to establish rounds of duration \(\Delta = 3\delta\)~\cite[Section 2.1]{sleepy}.
Finally, we assume that processes have synchronized local real-time clocks and that, even under asynchrony, clocks remain synchronized.

\subsubsection*{Asynchronous Period}
We assume the existence of a \emph{single asynchronous period} starting \emph{after} round~$r_a$, unknown to the processes, which could extend up to $\pi \in \mathbb{N}$ rounds. In other words, the rounds ranging in $[r_a+1, r_a + \pi]$ may experience asynchrony.

\subsubsection*{Cryptography} We assume a verifiable random function (VRF).  Each process $p_i$ with its secret key can evaluate $(\rho, \pi) \gets \op{VRF}_{p_i}(\mu)$ on any input $\mu$. The output is a deterministic pseudorandom value $\rho$ along with a proof $\pi$. Using $\pi$ and process $p_i$'s public key, anyone can verify whether $\rho$ is the correct evaluation of $\op{VRF}_{p_i}$ on input $\mu$.

\subsubsection*{Sleepiness}
Each round has two phases, one occurring at its beginning and one at its end.
In either phase, only an adversarially chosen subset of the processes are said to be \emph{awake}~\cite{sleepy}.
Processes that are not awake are said to be \emph{asleep}.
The subset of processes awake at the beginning of round~$r$ is $O_r$, and they coincide with the processes awake at the end of the previous round, $r-1$.
In other words, the processes awake at the beginning of a round are potentially different from those awake at the end of it.
Asleep processes do not execute the protocol, and messages for that round are queued and delivered in the first round in which the process is awake again.
When a process~$p_i$ goes from being awake to being asleep, we say that~$p_i$ \emph{goes to sleep}.
We denote with $H_r$ and $B_r$ the sets of well-behaved and Byzantine processes, respectively, that are awake at the beginning of round~$r$.
From now on, we refer to~$H_r$,~$B_r$, and~$O_r$ simply as processes that are awake at round~$r$, leaving it implicit that they are awake at the beginning of it.
The Byzantine processes never go to sleep: the adversary is either \emph{constant}, in which case we have~$B_r$ is the same at every round~$r$, or the adversary is \emph{growing}~\cite{DBLP:journals/iacr/MalkhiMR22}, in which case $B_r\subseteq B_{r+1}$ for every round~$r$.
In this work we will mainly focus on the growing adversary model.

\subsubsection*{Round structure} A round starts with a \emph{send phase}, and ends with a \emph{receive phase}, immediately prior to the beginning of the next round. Processes in $H_r$ participate in the send phase, while processes in $H_{r+1}$ in the receive phase. 

In the send phase of round $r$, each process $p_i \in O_r$ sends messages. A process~$p_i \in B_r$ may send arbitrary messages, and processes that are not awake in round $r$ do not send any messages.
If process~$p_i$ is well-behaved, then~$p_i$ sends the messages dictated by the protocol. 

In the receive phase of round~$r$, each well-behaved process that is awake at the end of round~$r$, i.e., a process in $H_{r+1}$, receives the following messages:
\begin{itemize}
    \item If round~$r$ belongs to a synchronous period, then~$p_i$ receives all the messages that it has not received yet and that were sent in any round $r' \le r$.
    \item Otherwise, if~$r$ belongs to an asynchronous period, then~$p_i$ receives an arbitrary subset of such messages.
\end{itemize}

Moreover, processes that are not awake at the end of round $r$ do not receive any messages. Finally, we assume that messages entering the peer-to-peer messaging protocol are disseminated to all processes, even if the original sender goes to sleep\footnote{Observe that this is roughly the case in most blockchain networks, such as Ethereum (\url{https://ethereum.org}) and Stellar (\url{https://stellar.org}). For example, in Ethereum, process votes are aggregated by intermediate nodes which then disseminate the votes independently.}. Furthermore, these messages withstand the transient asynchronous period we consider and are delivered to all awake processes once normal network conditions are restored.

\subsubsection*{Message structure} We assume that every message exchanged among processes has an \emph{expiration period} \(\eta \in \mathbb{N}\) (except in Section~\ref{sec:mmr}). Specifically, the \emph{expiration period} for round \(r\) is defined as the interval \([r-1 - \eta, r-1]\). Only the messages sent within this interval influence the protocol's behavior at round \(r\). Moreover, each message is tagged with the corresponding round number \(r\) and during each round of a protocol's execution, only the \emph{latest} messages sent by the processes are considered.

\subsection{Total-Order Broadcast and Asynchrony Resilience}

\begin{definition}[Log]
\label{def:log}
A \emph{log} is a finite sequence of blocks $b_i$, represented as $\Lambda = [b_1,b_2,...,b_k]$. Here, a block represents a batch of \emph{transactions} and it contains a reference to another block.
For two logs~$\Lambda$ and~$\Lambda'$, we write $\Lambda \preceq \Lambda'$ when~$\Lambda$ is a prefix of~$\Lambda'$, and we also say that~$\Lambda'$ extends~$\Lambda$.
We say that two logs are \emph{compatible} when one is a prefix of the other, and that they conflict when they are not compatible (i.e., none is a prefix of the other).
\end{definition}

\begin{definition}[{Byzantine total-order broadcast}]
\label{def:tob}
A Byzantine total-order broadcast (TOB) protocol ensures that all the well-behaved processes deliver compatible, growing logs.
In a Byzantine total-order broadcast protocol, every process can \emph{input} a value~$v$ and the broadcast primitive repeatedly \emph{delivers} logs~$\Lambda$.

A protocol for Byzantine total-order broadcast satisfies the following properties.

\begin{description}
    \item[Safety:] If two well-behaved processes deliver logs~$\Lambda_1$ and~$\Lambda_2$, then~$\Lambda_1$ and~$\Lambda_2$ are compatible.

    \item[Liveness:] For every valid transaction\footnote{A transaction is valid according to a global, efficiently computable predicate $P$, known to all processes. The specific details of this predicate are omitted.} there exists a log $\Lambda$ containing it and a round~$r$ such that all well-behaved processes awake for sufficiently long\footnote{The duration ``sufficiently long" varies depending on the protocol. For instance, in the MMR protocol~\cite{DBLP:journals/iacr/MalkhiMR22}, it is \(4\Delta\) in expectation, where \(\Delta\) denotes the network delay.} after~$r$ deliver $\Lambda$.

\end{description}

\end{definition} 

\begin{definition}[Dynamically available total-order broadcast]
\label{def:da-tob}
A protocol for \emph{total-order broadcast} is dynamically available if and only if the protocol satisfies safety and liveness (Definition~\ref{def:tob}) provided that at every round~$r$ it holds that $|B_r| < \beta|O_r|$, for some fixed \emph{failure ratio}~$\beta$.
\end{definition}

Currently implemented dynamically available total-order broadcast share a similar underlying structure: They build upon a \emph{graded agreement} primitive. 

\begin{definition}[Graded agreement]
\label{def:ga}
In a graded agreement protocol\footnote{Graded agreement is a variant of connected consensus~\cite{DBLP:conf/wdag/AttiyaW23}}, each process has an input {log} and, at the end of the protocol, outputs a set of logs with each log assigned a grade bit, such that the following properties are satisfied\footnote{It is important to note that various formulations of graded agreement exist, each possessing distinct properties. For example, the version of graded agreement utilized by Momose and Ren~\cite{DBLP:conf/ccs/Momose022} does not ensure bounded divergence. In contrast, the formulation adopted by D'Amato and Zanolini~\cite{DBLP:journals/corr/abs-2310-11331} assures that outputs of any grade will be unique. The properties of the graded agreement significantly influence the implementation of various dynamically available total-order broadcast protocols. In this work we focus our attention on the graded agreement of~\cite{DBLP:journals/iacr/MalkhiMR22}.}.

\begin{description}
\item[Graded consistency:] If a well-behaved process outputs a log with grade~$1$, then all well-behaved processes output the log with grade~$\ge 0$.

\item[Integrity:] If a well-behaved process outputs a log with any grade, then there exists a well-behaved process that inputs the log.

\item[Validity:] Processes output with grade~$1$ the longest common prefix among well-behaved processes' input logs.

\item[Uniqueness:] If a well-behaved process outputs a log with grade~$1$, then no well-behaved processes outputs any conflicting log with grade~$1$.

\item[Bounded divergence:] Each well-behaved process outputs at most two conflicting logs (with grade~$0$).
\end{description}
\end{definition}

Let  \(D_{r_a}\) be the set of logs decided by well-behaved processes in rounds~$\le {r_a}$.

\begin{definition}[Asynchrony resilience]
\label{def:async-res}
A dynamically available Byzantine total-order broadcast protocol is \emph{asynchrony resilient} if it preserves safety during periods of asynchrony, specifically during the interval \([r_a+1, r_a + \pi]\). This means that during \([r_a+1, r_a + \pi+1]\), no well-behaved process awake at round \(r_a\) decides on a log that conflicts with \(D_{r_a}\). Furthermore, after round \(r_a + \pi+1\), no well-behaved process should decide on a log conflicting with \(D_{r_a}\). A Byzantine total-order broadcast protocol is \emph{\(\pi\)-asynchrony resilient} if it can preserve safety during all asynchrony periods lasting up to~\(\pi\) rounds.
\end{definition}

\begin{definition}[Healing after asynchrony]
A Byzantine total-order broadcast protocol \emph{heals from asynchrony} if it restores its operation according to both safety and liveness after a period of asynchrony. Specifically, this means that after round \(r + k\), where \(r\) is the final round of an asynchronous period and \(k > 0\) is a constant, the following conditions are met:

\begin{description}
    \item[Safety:] For any two rounds \(r'\) and \(r''\) such that \(r', r'' > r + k\), and for any two well-behaved processes \(p_i\) and \(p_j\) awake in these rounds respectively, their logs (\(\Lambda_i^{r'}\) and \(\Lambda_j^{r''}\)) are compatible, i.e., either \(\Lambda_i^{r'} \preceq \Lambda_j^{r''}\) or \(\Lambda_j^{r''} \preceq \Lambda_i^{r'}\).
    \item[Liveness:] For every valid transaction, there exists a log \(\Lambda\) that includes it, and there is a round \(r' > r\) after which all well-behaved processes awake for a sufficient duration deliver \(\Lambda\).
\end{description}
\end{definition}

\subsection{Adversary}
\label{sec:adversary}

In dynamically available total-order broadcast protocols, where the protocol's behavior at each round \( r \) is exclusively influenced by messages received in the immediately preceding round \( r-1 \), it is typically sufficient to impose a threshold for the failure ratio \( \beta \), i.e., $|B_r| < \beta |O_r|$. Commonly, $\beta$ is set to $\frac{1}{2}$~\cite{goldfish, gl, DBLP:conf/ccs/Momose022, DBLP:conf/ccs/MalkhiM023, DBLP:journals/corr/abs-2310-11331}, although some work~\cite{DBLP:journals/iacr/MalkhiMR22} also explored $\beta=\frac{1}{3}$ and $\beta=\frac{1}{4}$. However, these protocols do not guarantee safety during periods of asynchrony, as they are assumed to be synchronous.

In this work, we adopt the approach of extending the time frame of prior rounds considered in message evaluation. This extended span aligns with the expiration duration of messages in our model. Importantly, the protocol's behavior at round~$r$ is influenced only by the \emph{most recent}, i.e., latest, unexpired messages sent by each process (See~\Cref{sec:system-model} for a discussion of message expiration).

However, as we discuss next, incorporating messages from an expanded range of prior rounds may diminish the system's resilience to dynamic participation and adversarial actions within such range. Conversely, this approach enhances tolerance against (bounded) periods of asynchrony.

If there are many processes whose \emph{latest} messages are \emph{unexpired} but which are no longer awake in round~$r$, the protocol's behavior may be adversely affected. This is because the adversary could in principle exploit these latest messages to their advantage, as they are not entirely up to date. We prevent this by bounding the \emph{churn rate} of well-behaved processes, i.e., by requiring that the rate at which awake and well-behaved processes go to sleep is bounded by~$\gamma$ per~$\eta$ rounds. Letting~$H_{s, r} = \bigcup_{s\le r' \le r}H_{r'}$ (with $H_{s} \coloneqq \emptyset$ if $s < 0$) be the set of processes that are awake and well-behaved \emph{at some point} in rounds~$[s,r]$, the requirement is then:
\begin{equation}
\label{eq:churn-bound}
|H_{r-\eta, r-1} \setminus H_r| \leq \gamma |H_{r-\eta, r-1}|
\end{equation}
In other words, at most a fraction~$\gamma$ of the well-behaved processes of the last~$\eta$ rounds are allowed to not be well-behaved processes of the current round~$r$. Besides bounding the churn rate, we also as usual need to bound the failure rate of each round, which we do by requiring a failure rate $\tilde{\beta} \leq \beta$, in particular with $\tilde{\beta} = \frac{\beta - \gamma}{\gamma(\beta - 2) + 1}$:
\begin{equation}
\label{eq:failure-ratio}
|B_{r}| < \tilde{\beta}|O_r|
\end{equation}
Here, $\beta$ is meant to be the failure ratio tolerated by the original dynamically available protocol, which is modified to use unexpired latest messages in order to strengthen its resilience to asynchrony. The failure rate of the modified protocol needs to be appropriately lowered, in particular to $\frac{\beta - \gamma}{\gamma(\beta - 2) + 1}$ if the churn rate is bounded by $\gamma$, to account for the additional power derived from exploiting latest messages of asleep processes.

Observe that, if $\gamma = 0$, our first requirement reduces to $|H_{r-\eta, r-1} \setminus H_r| = 0,$ i.e., awake processes do not go to sleep, so that the model reduces to one without dynamic participation. Moreover, $\tilde{\beta} = \beta$, so Equation~\ref{eq:failure-ratio} simply requires the failure ratio~$\beta$ of the original protocol. In other words, no extra stronger assumption is required under the standard synchronous model with constant participation. Observe also that $\eta = 0$ implies $H_{r-\eta,r-1} = \emptyset$, so that the first requirement does not introduce any restriction, regardless of which $\gamma$ we choose. In other words, fully dynamic participation is allowed. We can in particular let~$\gamma = 0$, meaning that the required failure ratio $\tilde{\beta}$ is once again just $\beta$, recovering the original model. Finally, note that $\gamma$ must be smaller than $\beta$, since otherwise Equation~\ref{eq:failure-ratio} requires $|B_r| < 0$: we cannot allow a fraction~$\beta$ of~$H_{r-\eta, r-1}$ to fall asleep before round~$r$, even if there is no adversary, because then $H_r$ cannot possibly meet a~$1-\beta$ quorum over all unexpired messages (if no more processes wake up).

The $\eta$-sleepy model in the work of D'Amato and Zanolini~\cite{rlmd} deals with the same problem. There, the churn rate is not bound explicitly, and instead a single all-encompassing assumption is made, called the $\eta$-sleepiness condition, the equivalent of which in our framework is\footnote{The original formulation more closely resembles $|B_{r} \cup H_{r-\eta, r-1}\setminus H_r| < \rho|H_r|$, where $\rho = \frac{\beta}{1 - \beta}$. This is equivalent to $|B_{r} \cup H_{r-\eta, r-1}\setminus H_r| < \beta |O_{r-\eta, r}|$, and in turn to $|H_r| > (1-\beta)|O_{r-\eta, r}|$.}:
\begin{equation}
\label{eq;eta-sleepiness}
|H_r| > (1-\beta)|O_{r-\eta, r}|
\end{equation}

\subsubsection*{Bounded asynchrony}

As discussed above, a round~$r$ might belong to the period of asynchrony and, if that's the case, a well-behaved process~$p_i$ might receive in~$r$ an arbitrary subset of the messages sent during such period. It is therefore necessary to forbid the awakening of too many well-behaved processes during asynchronous periods, because the messages they receive upon waking up are adversarially controlled, and thus they can be manipulated into sending messages that jeopardize the safety of decisions made \emph{before the period of asynchrony}. To preserve it, we must prevent the adversary from overwhelming the well-behaved processes which were awake in the last round before asynchrony started, round~$r_a$, either with its own messages or with those of newly awake well-behaved processes, or with corruption, since the adversary can grow. 

Analogously to $H_{s,r}$, we define $O_{s,r} = \bigcup_{s\le r' \le r}O_{r'}$ (with $O_r \coloneqq \emptyset$ if $r < 0$). We require the following conditions to hold whenever analyzing behavior related to asynchrony:
\begin{equation}
\label{eq:async-condition1}
    |H_{r_a} \setminus B_{r}| > (1-\beta)|O_{r-\eta, r}| \quad \forall r\in [r_a+1, r_a +\pi+1]
\end{equation}
\begin{equation}
\label{eq:async-condition2}
    H_{r_a} \subseteq H_{r_a + 1}
\end{equation}
The first condition must hold for all rounds in the period of asynchrony \emph{and for the first synchronous round after it}. For such rounds, we require that the well-behaved processes which were awake in the last synchronous round~$r_a$, \emph{and have not since been corrupted}, sufficiently (meaning, with the usual failure ratio) outnumber all other processes awake in the interval. Intuitively, the processes in $H_{r_a}$ attempt to preserve the safety of decisions made before asynchrony, unless they are corrupted, and they must sufficiently outnumber all other processes in order to do so. The reason why we include round $r_a + \pi + 1$, which is itself synchronous, is that round $r_a + \pi$ being asynchronous means that messages are not guaranteed to be received in its receive phase, and thus that processes in $H_{r_a + \pi + 1}$ still do not necessarily have access to up-to-date messages. The second condition simply requires that all process in $H_{r_a}$ are still awake \emph{at the end of round $r_a$}, so that they participate in the receive phase and in particular obtain messages for the current round, from the other processes in $H_{r_a}$. Knowledge of these messages is what prevents them from ``changing their mind" during the period of asynchrony. {Observe that, by the nature of the asynchronous period, round~$r_a$ is not known in advance by the processes. This implies that the conditions might not always be satisfied. However, as we will demonstrate, if these conditions are met during an asynchronous period of length~$\pi$, then the protocol exhibits resilience to such period.}

\section{Asynchrony-resilient Byzantine total-order broadcast}
\label{sec:ar-tob}

 Recent studies on dynamically available total-order broadcast protocols have explored diverse techniques to resolve consensus in the sleepy model and its variants~\cite{goldfish, rlmd, gl, DBLP:journals/iacr/MalkhiMR22, DBLP:conf/ccs/Momose022, DBLP:conf/sp/NeuTT21, DBLP:journals/corr/abs-2310-11331}. However, these protocols share a common limitation -- they are strictly applicable to synchronous models. This restriction is due to the CAP theorem~\cite{DBLP:journals/sigact/GilbertL02}, which stipulates that no consensus protocol can accommodate dynamic participation and simultaneously tolerate network partitions~\cite{DBLP:conf/sp/NeuTT21}. For this reason, methods to overcome this limitation have been proposed, e.g., the ebb-and-flow family of protocols~\cite{DBLP:conf/sp/NeuTT21}, and are currently being implemented (Ethereum). These involve pairing a dynamically available consensus protocol with a partially synchronous protocol, which provides finality, a concept commonly referred to as safety in the standard consensus literature. However, in scenarios where network partitions or asynchronous periods occur, it becomes challenging to ensure reliable outputs from dynamically available protocols. This is particularly relevant in blockchain contexts, where such conditions could lead to reorganizations of the chain output by these dynamically available protocols. Consequently, our research is driven towards developing mechanisms that improve the resilience of dynamically available protocols against (bounded) periods of asynchrony. Our aim is not to contravene the established impossibility result, but rather to make the output from the protocol more resistant to asynchrony. By doing so, even ebb-and-flow protocols can benefit, as the resulting protocol becomes more robust during periods of asynchrony~\cite{DBLP:conf/esorics/DAmatoZ23}.

However, achieving this resilience is not without drawbacks. As detailed in Section~\ref{sec:model}, our approach first necessitates equipping messages with an expiration period of~$\eta$ rounds. This parameter significantly affects the protocol's tolerance to adversaries and the level of dynamic participation among processes. A high value of $\eta$ could severely limit, or even negate, dynamic participation. Consequently, in practical applications, it is essential to carefully calibrate this parameter. The goal is to ensure that the system maintains dynamic availability while accommodating reasonable periods of network asynchrony.

\paragraph{Enhancing Resilience to Asynchrony in Dynamically Available Protocols}

We briefly outline our proposed mechanism to enhance dynamically available protocols, enabling them to resist bounded periods of asynchrony. The mechanism, simple yet effective, involves some key steps:

\begin{enumerate}
    \item \textbf{Choice of Message Expiration Parameter (\(\eta\))}: The first step involves selecting a suitable message expiration parameter, \(\eta\), which dictates the maximum length of asynchrony the protocol can withstand. This parameter should be calibrated to ensure the system maintains dynamic availability while accommodating reasonable periods of network asynchrony. The specifics of this calibration are beyond the scope of this work.

    \item \textbf{Understanding Adversarial Constraints}: Based on Inequality~\ref{eq:churn-bound} and Inequality~\ref{eq:failure-ratio}, it is crucial to understand the adversarial constraints, particularly how they limit the churn rate of participants in relation to the expiration parameter.

    \item \textbf{Protocol Adjustments}: Different protocols may require distinct adjustments. It is essential to modify each step of the protocol to incorporate the handling of the latest messages from participants in the preceding \(\eta\) rounds. We will detail typical adjustments necessary for these dynamically available TOBs, as they are all share a similar underlying structure. Specifically, these protocols operate in rounds, with a series of rounds forming a view. Within each view, a proposal is made and a decision is reached. To make such decisions, one or multiple instances of {graded agreement} are employed within each view~\cite{DBLP:conf/ccs/Momose022, DBLP:journals/iacr/MalkhiMR22, DBLP:conf/ccs/MalkhiM023, gl, DBLP:journals/corr/abs-2310-11331}. We show how to make these adjustments using the Malkhi, Momose, and Ren protocol~\cite{DBLP:journals/iacr/MalkhiMR22}, chosen for its simplicity and instructional value. While more complex protocols necessitate a more detailed examination, it's important to note that our approach remains applicable to them as well.

    \item \textbf{Proof of Preserved Properties}: The final step involves demonstrating that all essential properties of the protocol are preserved when messages are equipped with expiration periods.
\end{enumerate}

This mechanism aims to provide a framework for enhancing the resilience of dynamically available protocols, contributing to more robust and reliable systems in the face of network asynchrony.

\subsection{Total-Order Broadcast Protocol of Malkhi, Momose, and Ren~\cite{DBLP:journals/iacr/MalkhiMR22}}
\label{sec:mmr}

We recall the total-order broadcast protocol proposed by Malkhi, Momose, and Ren~\cite{DBLP:journals/iacr/MalkhiMR22} and present it within its original framework, specifically the growing adversary model.

Malkhi, Momose, and Ren~\cite{DBLP:journals/iacr/MalkhiMR22} propose a total-order broadcast protocol with a resilience of~$\frac{1}{3}$, and expected termination in 6 rounds, without the assumption of participation stabilization~\cite{DBLP:conf/ccs/Momose022, DBLP:journals/corr/abs-2310-11331}. The authors extended the sleepy model~\cite{sleepy} to allow for a growing number of faulty processes, and developed a simple {graded agreement} protocol with a fault tolerance of $\frac{1}{3}$, upon which their TOB protocol is based.

\begin{figure}[ht!]
    \centering
    \begin{boxedminipage}[t]{\columnwidth}\small
Process $p_i$ runs the following algorithm if awake in one of the two phases of round $r$:
\begin{description}
\item[Beginning of round~$r$ -- Send phase:] Multi-cast [\textsc{vote}, $\Lambda$]$_{p_i}$ where $\Lambda$ is the input log.
\item[End of round~$r$ -- Receive phase:] Tally \textsc{vote} messages and decides the outputs as follows. Let $m$ be the number of \textsc{vote} messages received.
    \begin{enumerate}
        \item For any $\Lambda$ voted by $> \frac{2m}{3}$ processes, output ($\Lambda$, $1$)
        \item For any $\Lambda$ voted by $> \frac{m}{3}$ processes (but $\le \frac{2m}{3}$), output ($\Lambda$, $0$)
    \end{enumerate}
\end{description}
If  $\Lambda'$ extends $\Lambda$, then [\textsc{vote}, $\Lambda'$] counts as a vote for $\Lambda$. Two different \textsc{vote} messages from the same process are ignored.
\end{boxedminipage}
    \caption{Graded Agreement $GA$ - Malkhi, Momose, and Ren~\cite{DBLP:journals/iacr/MalkhiMR22}}
\label{fig:graded-agreement-mmr}
\end{figure}

Figure~\ref{fig:graded-agreement-mmr} describes an instance of the graded agreement protocol of Malkhi, Momose, and Ren~\cite{DBLP:journals/iacr/MalkhiMR22}. As in the original formulation, different processes can be awake in the two phases. Every well-behaved process awake at round~$r$ multi-casts a \textsc{vote} message for a log~$\Lambda$. Then, during the receive phase of round~$r$, every awake process~$p_i$ tallies \textsc{vote} messages for the received logs, counting votes for a log extending~$\Lambda$ as one for~$\Lambda$, and ignoring multiple votes from the same process. If there exists a log~$\Lambda$ that has been voted by more than~$\frac{2}{3}$ (or more than $\frac{1}{3}$) of the processes that~$p_i$ heard from, then~$p_i$ outputs~$\Lambda$ with grade~$1$ (or with grade~$0$). 

\begin{algo}
\vbox{
\small
\begin{numbertabbing}\reset
  xxxx\=xxxx\=xxxx\=xxxx\=xxxx\=xxxx\=MMMMMMMMMMMMMMMMMMM\=\kill
  \textbf{Round~1 of view $v$ ($r = 2v-1$)}\\
    Compute outputs from $GA_{v-1,2}$ \label{}\\
     \textbf{if} $GA_{v-1,2}$ outputs $(\Lambda,1)$ \textbf{then} \label{}\\
    \> \textbf{decide} $\Lambda$ \label{alg:mmr-decide}\\
     $\mathcal{L}_{v-1} \gets \Lambda'$ \label{}
    \`$\Lambda'$ is the longest log s.t. $GA_{v-1,2}$ outputs $(\Lambda',*)$ \label{alg:mmr-set-l}\\
     Start $GA_{v,1}$ with a log in the \textsc{propose} message with the largest \label{}\\
     valid $\op{VRF}(v)$ not conflicting with $\mathcal{L}_{v-1}$\label{}\\
     \\
   \textbf{Round~$2$ of view $v$ ($r = 2v$)}\\
   Compute outputs from $GA_{v,1}$ \label{}\\
 Start $GA_{v,2}$ with the longest $\Lambda$ s.t. $GA_{v,1}$ outputs $(\Lambda,1)$ \label{alg:mmr-input}\\
      $\mathcal{C}_v \gets \mathcal{C}$
    \` $\mathcal{C}$ is the longest log s.t. $GA_{v,1}$ outputs $(\mathcal{C}, *)$ \label{} \\
     Multi-cast [\textsc{propose}, $\Lambda'$, $\op{VRF}_{p_i}(v+1)$]$_{p_i}$\label{}
    \` $\Lambda' := b||\mathcal{C}_v$\label{}

\end{numbertabbing}
}
\caption{Total-order broadcast - Malkhi, Momose, and Ren~\cite{DBLP:journals/iacr/MalkhiMR22} (protocol for $p_i$ in view $v$). Process $p_i$ runs the following algorithm if it is awake at round~$r$. View $0$ lasts~$1$ round, Round~$r=0$. At such round, multi-cast [\textsc{propose}, $\Lambda$, $\op{VRF}_{p_i}(1)$]$_{p_i}$ to propose $\Lambda:=[b_0]$. All later views $v\ge 1$ take two rounds ($r=2v-1, 2v$) and work as follows:}
\label{alg:ab-mmr}
\end{algo}

Malkhi, Momose, and Ren~\cite{DBLP:journals/iacr/MalkhiMR22} implement their total-order broadcast protocol (Algorithm~\ref{alg:ab-mmr}) via two instances of graded agreement. Algorithm~\ref{alg:ab-mmr} is executed in views spanning two rounds each, corresponding to two instances of graded agreement (Figure~\ref{fig:graded-agreement-mmr}). The exception is view~$0$, which requires only a single round. Specifically, at round~$1$ of view~$0$, every awake process~$p_i$ multi-casts a [\textsc{propose}, $\Lambda$, $\op{VRF}_{p_i}(1)$]$_{p_i}$ message, proposing~$\Lambda:=[b_0]$.

Subsequently, at round~$1$ of any other view~$v \ge 1$, each awake and well-behaved process calculates the outputs of $GA_{v-1,2}$, deciding for any log~$\Lambda$ that is output with a grade~$1$. In addition, it sets $\mathcal{L}_{v-1}$ as the longest log $\Lambda'$ for which $GA_{v-1,2}$ generates output at any grade. It then initiates a graded agreement instance~$GA_{v,1}$, inputting a log contained in the \textsc{propose} message with the largest valid $\op{VRF}(v)$, ensuring it doesn't conflict with $\mathcal{L}_{v-1}$.

At round~$2$ of this view, every awake and well-behaved process~$p_i$ computes its outputs from $GA_{v,1}$, and starts a graded agreement instance~$GA_{v,2}$ with the input being the longest log $\Lambda$ that $GA_{v,1}$ outputs with a grade~$1$. Notably, due to the validity property, it's always possible to identify such a~$\Lambda$. Furthermore, process~$p_i$ proposes for view~$v+1$ a block~$b$ extending the longest log $\mathcal{C}_v$ where $GA_{v,1}$ outputs $(\mathcal{C}_v, *)$. This means process~$p_i$ multi-casts a [\textsc{propose}, $\Lambda'$, $\op{VRF}_{p_i}(v+1)$]$_{p_i}$ message with $\Lambda' := b||\mathcal{C}_v$.

The total-order broadcast depicted in Algorithm~\ref{alg:ab-mmr} is not resilient to periods of asynchrony, losing its safety regardless of the duration of the asynchrony. This vulnerability arises because the protocol is designed to operate under synchronous conditions. To understand the implications more concretely, consider the adversary's capabilities during an asynchronous period: In such scenarios, the adversary possesses the ability to dictate the set of messages that any well-behaved process receives. This control extends to influencing the decision-making process of these well-behaved processes. Essentially, by selectively manipulating the message flow, the adversary can steer the decisions of these processes in its advantage, breaking the safety property. For instance, D'Amato and Zanolini~\cite{rlmd} show this for the Goldfish protocol~\cite{goldfish}. An analogous reasoning applies to all the other dynamically available TOB protocols as well, as they are all assumed to be synchronous. 

\subsection{Extended Graded Agreement Protocol}

In order to devise a dynamically available Byzantine total-order protocol with deterministic safety that can effectively handle periods of bounded asynchrony, it becomes essential to \emph{extend} the concept of a graded agreement protocol. This adjustment is crucial in facilitating discussions about the ``messages received in previous rounds''.

Graded agreement, by nature, is a \emph{one-shot} primitive. This means it does not produce a sequence of logs but rather is instantiated with specific inputs and, once it provides output, its execution terminates. In this framework, arguments pertaining to ``unexpired messages from previous rounds'' do not fit.

In the subsequent sections, we demonstrate how to enhance the graded agreement protocol initially presented in Section~\ref{sec:mmr} and we also establish that this improved primitive upholds the properties of graded agreement, as outlined in Section~\ref{sec:model}.

In this section, we elaborate on the extension of the graded agreement protocol initially presented in Figure~\ref{fig:graded-agreement-mmr}. This extended protocol maintains a send and receive phase at round~$r$, with the send phase remaining unchanged.

At the beginning of the protocol, each awake process $p_i$ comes equipped with an initial set of \textsc{vote} messages, denoted as~\(\mathcal{M}_0^i\). These messages originate from a set of processes~\(\mathcal{P}_0\), each supporting a specific log~\(\Lambda\). We require that the cardinality of~\(H_r\) exceeds~\(\frac{2}{3}\) of the cardinality of~\(O_r \cup \mathcal{P}_0\), and each set~\(\mathcal{M}_0^i\) contains a maximum of one message per process. Process~\(p_i\) tallies all the votes it has accumulated from round~\(r\) and discards equivocations. Furthermore, it discards votes in \(\mathcal{M}_0^i\) sent by processes from which~\(p_i\) has received a new \textsc{vote} message in round~\(r\). As a result, by the end of the protocol, process~\(p_i\) holds at most one vote per process in~\(\mathcal{P}_0 \cup O_r\). The \textsc{vote} message from round~\(r\) takes precedence over the initial set of votes \(\mathcal{M}_0^i\). The set of all remaining \textsc{vote} messages, referred to as~\(\mathcal{M}_r^i\), is then employed to output logs with a grade, aligning with the methodology in Figure~\ref{fig:graded-agreement-mmr}. The requirement for grade~$0$ is a quorum of \(\frac{1}{3}\) and for grade~$1$, a quorum of~\(\frac{2}{3}\). It is worth noting that when \(\mathcal{M}_0^i = \emptyset\) for all \(p_i\), we revert to the standard graded agreement from Figure~\ref{fig:graded-agreement-mmr}.
\begin{figure}[ht!]
    \centering
    \begin{boxedminipage}[t]{\columnwidth}\small
Process $p_i$ runs the following algorithm if awake in one of the two rounds:\\

    \textbf{Beginning of round~$r$ -- Send phase:} Multi-cast [\textsc{vote}, $\Lambda'$]$_{p_i}$ where $\Lambda'$ is the input log extending~$\Lambda$.\\
    
    \textbf{End of round~$r$ -- Receive phase:} Let $\mathcal{M}_r^i$ be the set of \textsc{vote} messages from round~$r$ and from $\mathcal{M}_0^i$ (discarding equivocations in either set, and discarding messages in $\mathcal{M}_0^i$ if the processes that sent such messages also sent a message in round~$r$). 
    
    Tally \textsc{vote} messages in $\mathcal{M}_r^i$ and decide the outputs as follows. Let $m$ be the number of \textsc{vote} messages received.
    \begin{enumerate}
        \item For any $\Lambda$ voted by $> \frac{2m}{3}$ processes, output ($\Lambda$, $1$)
     \item For any $\Lambda$ voted by $> \frac{m}{3}$ processes (but $\le \frac{2m}{3}$), output ($\Lambda$, $0$)
    \end{enumerate}
If  $\Lambda'$ extends $\Lambda$, then [\textsc{vote}, $\Lambda'$] counts as a vote for $\Lambda$. Two different \textsc{vote} messages from the same process are ignored.
\end{boxedminipage}
    \caption{Extended Graded Agreement $GA$ initialized with a set $\mathcal{M}_0^i$ of \textsc{vote} messages from a set of processes~$\mathcal{P}_0$, each supporting some log~$\Lambda$ -- protocol for process~$p_i$.}
\label{fig:graded-agreement-mmr-ext}
\end{figure}

\begin{lemma}
\label{lem:extended-ga-prop}
   The extended   graded agreement presented in Figure~\ref{fig:graded-agreement-mmr-ext} satisfies the original properties of graded agreement (Section~\ref{sec:model}). It moreover satisfies the following property, both for  synchronous and asynchronous rounds.

\begin{description}
    \item[Clique validity:] Consider $H' \subset H_r \cup H_{r+1}$ such that all $p_i \in H' \cap H_r$ have an extension of $\Lambda$ as input, and such that, for any $p_i \in H' \cap H_{r+1}$, $\mathcal{M}_0^i$
    contains a message from each process in $H'$, also all for some extension of $\Lambda$. Moreover, suppose that $|H'| > \frac{2}{3} |O_r \cup \mathcal{P}_0|$. Then, all processes in $H' \cap H_{r+1}$ output $\Lambda$ with grade 1.
\end{description}
    
\end{lemma}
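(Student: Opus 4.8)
The plan is to reduce the verification of the five original properties to the standard analysis of the Malkhi--Momose--Ren graded agreement (Figure~\ref{fig:graded-agreement-mmr}), the only genuinely new ingredient being the bookkeeping between the initial set $\mathcal{M}_0^i$ and the round-$r$ votes. The central observation I would establish first is a uniform well-behaved-supermajority fact. Write $N = |S_r \cup \mathcal{P}_0|$. Since each $\mathcal{M}_r^i$ retains at most one vote per process and every retained vote is sent by a process in $S_r \cup \mathcal{P}_0$, we have $m_i := |\mathcal{M}_r^i| \le N$ for every $p_i \in H_{r+1}$. Under synchrony every $p_i \in H_{r+1}$ receives all round-$r$ votes, so by the precedence rule (a round-$r$ vote overrides the $\mathcal{M}_0^i$ entry of its sender) the tally of $p_i$ counts the genuine round-$r$ vote of each of the $|H_r|$ well-behaved processes awake at the start of round $r$. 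Combined with the hypothesis $|H_r| > \tfrac23 N$, this gives $|H_r| > \tfrac23 N \ge \tfrac23 m_i$, and correspondingly the number of non-$H_r$ (adversarial, superseded, or absent) votes in any $\mathcal{M}_r^i$ is $m_i - |H_r| < \tfrac13 m_i \le \tfrac13 N$.

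From this fact the five properties follow by the usual quorum-intersection arguments. For validity, every process in $H_r$ votes for an extension of the longest common prefix $\Lambda^\star$ of their inputs, so $\Lambda^\star$ collects $|H_r| > \tfrac23 m_i$ votes and is output with grade $1$. For graded consistency, if $p_i$ outputs $\Lambda$ with grade $1$ then the $H_r$-processes it counts for $\Lambda$ number at least $|H_r| - \tfrac13 m_i \ge |H_r| - \tfrac13 N > \tfrac13 N$; under synchrony every $p_j \in H_{r+1}$ counts these same well-behaved round-$r$ votes, so $p_j$ sees $> \tfrac13 N \ge \tfrac13 m_j$ votes for $\Lambda$ and outputs it with grade $\ge 0$. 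Uniqueness follows by noting that two conflicting grade-$1$ outputs would force two \emph{disjoint} sets of $H_r$-processes, each of size $> |H_r| - \tfrac13 N$, inside $H_r$, yielding $|H_r| > 2(|H_r| - \tfrac13 N)$ and hence $|H_r| < \tfrac23 N$, a contradiction. Bounded divergence is immediate since three pairwise-conflicting logs would each need $> \tfrac13 m_i$ mutually disjoint votes, exceeding $m_i$; and integrity holds because a log output at any grade has $> \tfrac13 m_i$ votes while fewer than $\tfrac13 m_i$ come from outside $H_r$, so some well-behaved process voted for an extension of it.

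For clique validity I would argue directly, without any synchrony assumption. Fix $p_i \in H' \cap H_{r+1}$; I claim $p_i$ counts a vote for an extension of $\Lambda$ from every $q \in H'$. If $p_i$ received $q$'s round-$r$ vote, then $q \in H_r$ and by hypothesis that vote extends $\Lambda$, and it is the vote kept by precedence. Otherwise---whether $q$ never sent in round $r$ or $p_i$ merely failed to receive it, the case that asynchrony makes possible---the $\mathcal{M}_0^i$ entry contributed by $q$ is not discarded, and by hypothesis it too extends $\Lambda$. Hence $\mathcal{M}_r^i$ contains at least $|H'|$ votes for $\Lambda$ out of $m_i \le |S_r \cup \mathcal{P}_0|$ total, and since $|H'| > \tfrac23 |S_r \cup \mathcal{P}_0| \ge \tfrac23 m_i$ the log $\Lambda$ clears the grade-$1$ threshold, so $p_i$ outputs $\Lambda$ with grade $1$.

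The part I expect to require the most care is the interaction between $\mathcal{M}_0^i$ and the round-$r$ votes: I must check that the rule discarding an $\mathcal{M}_0^i$ entry whenever its sender also voted in round $r$, together with equivocation removal, never drops a well-behaved vote the argument relies on, and that $m_i \le N$ holds uniformly even though the sets $\mathcal{M}_0^i$ (and hence the values $m_i$) may differ across processes. For clique validity specifically, the delicate point is exactly the asynchronous case: I must ensure that a missing round-$r$ vote triggers the fallback to the $\mathcal{M}_0^i$ entry rather than leaving $q$ uncounted, which is precisely where the hypothesis that $\mathcal{M}_0^i$ already contains a $\Lambda$-extending message from every member of $H'$ becomes essential.
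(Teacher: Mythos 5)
Your proposal is correct and follows essentially the same route as the paper's proof: it isolates the supermajority fact $|H_r| > \frac{2}{3}|S_r \cup \mathcal{P}_0| \ge \frac{2}{3}|\mathcal{M}_r^i|$ together with $H_r \subseteq \mathcal{M}_r^i$ under synchrony, and then derives the five original properties by the standard quorum-intersection counts and clique validity by the direct precedence-versus-fallback case analysis. The only cosmetic difference is that you prove uniqueness via two disjoint well-behaved vote sets inside $H_r$ rather than by reusing the graded-consistency conclusion, which is an equivalent counting argument.
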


\begin{proof}
    The proofs of the shared properties is similar as in the original protocol. There, we use that $|H_r| > \frac{2}{3}|O_r|$, whereas here we use $|H_r| > \frac{2}{3}|O_r \cup \mathcal{P}_0|$ in an analogous manner, as $O_r \cup \mathcal{P}_0$ is set the of all processes whose messages can influence the outputs, much like $O_r$ in the original protocol. 

Let us consider a round~$r$, and let $n_r$ be the maximum possible perceived participation by any well-behaved participant awake in round~$r$, i.e., $n_r = \left|\mathcal{P}_0 \cup O_r\right|$. We repeatedly use the assumption that $|H_r| > \frac{2}{3}n_r$. Moreover, for all properties other than clique validity, network synchrony is assumed, so we repeatedly use that, for all $p_i \in H_{r+1}$, $H_r \subset \mathcal{M}_r^i$, since all well-behaved messages from $H_r$ are broadcast on time and thus received by the end of the round.

For the \emph{graded consistency} property, let us assume that process~$p_i$ outputs a log~$\Lambda$ with grade~$1$ and let $m = |\mathcal{M}_r^i|\leq n_r$ be the perceived participation of process~$p_i$. Moreover, let~$S$ be the set of processes whose message in $\mathcal{M}_r^i$ is for an extension of $\Lambda$. By assumption, $|S| > \frac{2}{3}m$, and $|H_r| > \frac{2}{3}n_r$. Moreover, $|S| + |H_r| - |S \cap H_r| = |S \cup H_r| \leq m$, since $S, H_r \subset \mathcal{M}_r^i$. Therefore, $|S \cap H_r| \geq |S| + |H_r| - m > \frac{2}{3}(n_r + m) - m = \frac{2}{3}n_r - \frac{m}{3} \geq \frac{2}{3}n_r - \frac{n_r}{3} = \frac{n_r}{3}$, i.e., $|S \cap H_r| > \frac{n_r}{3}$. For any process $p_j \in H_{r+1}$, $S \cap H_r \subset \mathcal{M}_r^j$, so $p_j$ counts $ > \frac{n_r}{3}$ votes for extensions of~$\Lambda$, and it thus outputs~$\Lambda$ with at least grade~$0$.

The proof for the \emph{integrity} property follows from a very similar argument as for graded consistency, in this case with $|S| > \frac{m}{3}$. In particular, $|S \cap H_r| \geq |S| + |H_r| - m > \frac{m}{3} + \frac{2}{3}n_r - m = \frac{2}{3}(n_r - m)$. Since $m\le n_r$, it follows that $S \cap H_r \neq \emptyset$, implying that at least a well-behaved process voted for a log extending~$\Lambda$.

For \emph{validity}, let~$\Lambda$ be the longest common prefix among well-behaved processes' inputs logs at round~$r$. Every process in~$H_r$ multi-casts a \textsc{vote} message for an extension of~$\Lambda$. The proof easily follows from the assumption that~$|H_r|>\frac{2}{3}n_r$, and from $H_r \subset \mathcal{M}_r^i$ for all $p_i \in H_{r+1}$. 

To prove \emph{uniqueness}, let us assume that a well-behaved process~$p_i$ awake at round~$r$ outputs a log~$\Lambda$ with grade~$1$. By the same logic of the graded consistency property, we have that every other well-behaved process~$p_j$ awake at round~$r$ sees $|S \cap H_r| > \frac{n_r}{3}$ \textsc{vote} messages for an extension of~$\Lambda$. This implies that there cannot be a well-behaved process~$p_j$ that sees more that $\frac{2}{3}m$ \textsc{vote} messages for a conflicting log.

For \emph{bounded divergence}, observe that in order to be output with any grade by process $p_i$, a log~$\Lambda$ must be voted by more than $\frac{m}{3}$ processes, where $m = |\mathcal{M}_r^i|$ is the perceived participation of $p_i$. Recall that $\mathcal{M}_r^i$ contains at most one message per process. Thus, each process outputs at most two conflicting logs.

Finally, for the \emph{clique validity} property, let us consider a process $p_i \in H' \cap H_{r+1}$. By assumption, $\mathcal{M}_0^i$ contains a \textsc{vote} message for some extension of~$\Lambda$ from each process in $H'$. Since all \textsc{vote} messages from $H' \cap H_r$ are also by assumption for some extension of~$\Lambda$, it is the case that $\mathcal{M}_r^i$ also contains a \textsc{vote} message for each process in~$H'$, all for extensions of~$\Lambda$. By assumption, $|H'| > \frac{2}{3} |O_r \cup \mathcal{P}_0| = \frac{2}{3} n_r$ which implies that~$\Lambda$ is output with grade 1 by~$p_i$.
\end{proof}

Note that the mechanism of extending the graded agreement protocol by Malkhi, Momose, and Ren~\cite{DBLP:journals/iacr/MalkhiMR22} is also applicable to other graded agreement protocols. Specifically, by providing each process \( p_i \) with a set \( \mathcal{M}^i_0 \) possessing the characteristics defined above, the same logic can be applied to adapt other protocols similarly.

\subsection{Extended Byzantine Total-Order Broadcast Protocol}

We show that the extended graded agreement protocol from the previous section can be used to capture expiration of messages in the~$\eta$-sleepy model, allowing us to simply prove safety and liveness of Algorithm~\ref{alg:ab-mmr} in the~$\eta$-sleepy model with messages subject to expiration.

{Recall that Algorithm~\ref{alg:ab-mmr} proceeds in views of two rounds each, and in each round an instance of graded agreement (Figure~\ref{fig:graded-agreement-mmr}) is started. In order to make Algorithm~\ref{alg:ab-mmr} asynchrony resilient, we modify it to use the latest unexpired messages as inputs in its graded agreement instances, i.e., a process $p_i \in H_{r+1}$ computes its outputs from a $GA$ instance started in round~$r$ based on the set of unexpired, latest messages, i.e., the latest among those from rounds $[r -\eta, r]$, with equivocating latest messages being discarded. {From this point, ``Algorithm~\ref{alg:ab-mmr} modified to use latest unexpired messages'', or simply ``the modified Algorithm~\ref{alg:ab-mmr}'',  refers precisely to this modified protocol.}

Note that a $GA$ instance in the modified Algorithm~\ref{alg:ab-mmr} corresponds exactly to a specific instance of the extended graded agreement depicted in Figure~\ref{fig:graded-agreement-mmr-ext}. In particular, the $GA$ instance at round~$r$ of the modified Algorithm~\ref{alg:ab-mmr} corresponds to an instance of extended graded agreement protocol where the initial set $\mathcal{M}^i_0$ of process $p_i$ is taken to contain the set of all latest messages among those from rounds $[r-\eta, r)$ seen by $p_i$, with equivocating latest messages being discarded. The set $\mathcal{M}^i_{r}$, which $p_i \in H_{r+1}$ uses to determine its output in the extended graded agreement protocol used in Algorithm~\ref{alg:ab-mmr}, contains simply all latest unexpired messages, i.e., the latest messages among those from rounds $[r -\eta, r]$ (without equivocations). It is helpful to show that the assumptions of the extended graded agreement protocol hold when instantiated in the context of the modified Algorithm~\ref{alg:ab-mmr}.
For an extended graded agreement protocol happening at a synchronous round~$r$, we have required that $|H_r| > \frac{2}{3}|O_r \cup \mathcal{P}_0|$. In the particular instance we have constructed above in the context of Algorithm~\ref{alg:ab-mmr}, $\mathcal{P}_0$, the set of senders of messages in $\mathcal{M}_0^i$, is contained in $H_{r-\eta, r-1} \cup B_r \subseteq O_{r-\eta, r}$, since by construction $\mathcal{M}_0^i$ contains only messages from rounds $[r-\eta, r)$, and we are at round~$r$. Therefore, $|O_r \cup \mathcal{P}_0| \le |O_{r-\eta, r}|$, and thus $|H_r| > \frac{2}{3}|O_r \cup \mathcal{P}_0|$ immediately follows from the $\eta$-sleepiness assumption $|H_r| > \frac{2}{3}|O_{r-\eta, r}|$.  Since all the assumptions hold, Lemma~\ref{lem:extended-ga-prop} guarantees that graded consistency, integrity, validity, uniqueness, and bounded divergence all apply to the extended   graded agreement instances used in our modified Algorithm~\ref{alg:ab-mmr}. 

\begin{theorem}
\label{thm:tob}
    Algorithm~\ref{alg:ab-mmr} with the extended graded agreement protocol implements Byzantine total-order broadcast. 
\end{theorem}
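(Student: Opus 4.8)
The plan is to reduce the statement to the original safety and liveness proofs of Malkhi, Momose, and Ren~\cite{DBLP:journals/iacr/MalkhiMR22}, exploiting the fact that those proofs are \emph{modular} over the graded agreement abstraction: they invoke only the five weak graded agreement properties of Definition~\ref{def:ga} (together with the VRF-based leader election), and never the specific single-round synchronous implementation of Figure~\ref{fig:graded-agreement-mmr-ext}. The preceding discussion already supplies the crucial bridge: every $GA$ instance of the modified Algorithm~\ref{alg:ab-mmr} coincides, on well-behaved inputs and outputs, with an instance of the extended weak graded agreement whose initial set $\mathcal{M}_0^i$ is the collection of latest messages from rounds $[r-\eta, r)$, and the $\eta$-sleepiness assumption $|H_r| > \frac{2}{3}|S_{r-\eta,r}|$ guarantees the precondition $|H_r| > \frac{2}{3}|S_r \cup \mathcal{P}_0|$ of Lemma~\ref{lem:extended-ga-prop}. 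Consequently, graded consistency, integrity, validity, uniqueness, and bounded divergence all hold for each such instance in the $\eta$-sleepy model.

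First I would restate safety as an induction over views, establishing the safety clause of Definition~\ref{def:tob}. A decision on a log $\Lambda$ at round~$1$ of view~$v$ (Line~\ref{alg:mmr-decide}) arises from $GA_{v-1,2}$ outputting $(\Lambda,1)$. By \emph{uniqueness}, no well-behaved process outputs any conflicting log with grade~$1$, and by \emph{graded consistency} every well-behaved process outputs $\Lambda$ with grade at least~$0$; hence the log $\mathcal{L}_{v-1}$ that each process sets at Line~\ref{alg:mmr-set-l} (the longest log output at any grade) is an extension of $\Lambda$ for every well-behaved process. Since each process inputs into $GA_{v,1}$ only a proposed log not conflicting with its $\mathcal{L}_{v-1}$, and by \emph{integrity} every output log traces back to a well-behaved input, all well-behaved inputs from view~$v$ onward extend $\Lambda$; the induction then shows $\Lambda$ is never overridden.

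Next I would argue liveness from \emph{validity} together with the leader election. With probability at least $\frac{1}{2}$, the largest valid $\op{VRF}(v)$ among the \textsc{propose} messages belongs to a well-behaved process; when this happens, all well-behaved processes feed that single proposal (extending the common $\mathcal{C}_v$) into $GA_{v,1}$, so by validity it is output with grade~$1$, is then input into $GA_{v,2}$ (Line~\ref{alg:mmr-input}) and again output with grade~$1$, and is therefore decided at the next view. This yields decision of a well-behaved leader's value in an expected constant number of views, matching the MMR liveness argument.

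The main obstacle I anticipate is not any new calculation but the careful verification of the modularity claim: I must confirm that the MMR view-chaining argument never silently relies on properties of the original single-round, $S_r$-only quorum beyond the abstract graded agreement guarantees, and that the latest-message and expiration semantics interact with $\mathcal{L}_{v-1}$ and $\mathcal{C}_v$ exactly as the unmodified protocol requires. The delicate point is that the quorum intersection now runs over the enlarged sender set $S_r \cup \mathcal{P}_0$ rather than $S_r$; however, this is precisely the generalization that Lemma~\ref{lem:extended-ga-prop} certifies under $\eta$-sleepiness, so once modularity is checked the original proofs transfer essentially verbatim.
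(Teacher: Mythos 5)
Your proposal is correct and follows essentially the same route as the paper: the paper's proof is exactly the modularity argument you describe, namely that each $GA$ instance of the modified Algorithm~\ref{alg:ab-mmr} corresponds to an instance of the extended weak graded agreement whose preconditions follow from $\eta$-sleepiness, so Lemma~\ref{lem:extended-ga-prop} supplies the five properties of Definition~\ref{def:ga}, and the safety and liveness proofs of Malkhi, Momose, and Ren (their Lemmas~6 and~7) apply verbatim since they rely only on those properties. The paper simply cites those two lemmas rather than re-deriving the view-chaining and leader-election arguments you sketch, but the substance is identical.
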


\begin{proof}
    As we have just discussed, each instance of the extended graded agreement protocol utilized in the modified Algorithm~\ref{alg:ab-mmr} satisfies the five properties of the graded agreement primitive from Malkhi, Momose, and Ren~\cite{DBLP:journals/iacr/MalkhiMR22}. Since the safety and liveness proofs of Algorithm~\ref{alg:ab-mmr} (Lemma 6 and Lemma 7 of~\cite{DBLP:journals/iacr/MalkhiMR22}) rely entirely on these properties, they apply to the modified Algorithm~\ref{alg:ab-mmr} as well.
\end{proof}

Recall that $r_a$ is the last round before asynchrony starts. We have the following results for Algorithm~\ref{alg:ab-mmr} modified to use latest messages.

\begin{lemma}
\label{lem:clique-sticks-together}
    Let $[r_a + 1, r_a + \pi]$ with $\pi < \eta$ be the period of asynchrony. If every process~$p_i$ in $H_{r_a}$ multi-casts a \textsc{vote} message for an extension of a log~$\Lambda$ in round~$r_a$, then every process~$p_i \in H_{r_a} \cap H_r$ multi-casts a \textsc{vote} message for an extension of log~$\Lambda$, for every round~$r \in [r_a + 1, r_a + \pi+1]$.
\end{lemma}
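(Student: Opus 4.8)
The plan is to prove the statement by induction on $r$, tracking the ``surviving clique'' $C_r := H_{r_a}\cap H_r\setminus B_r$ and showing that every member of $C_r$ multi-casts a vote for an extension of $\Lambda$ at round $r$. The base case $r=r_a$ is the hypothesis of the lemma (here $C_{r_a}=H_{r_a}$, since $H_{r_a}$ is well-behaved and hence disjoint from $B_{r_a}$). Before the induction, I would record three facts that make the clique ``stick together'': (i) because round $r_a$ is synchronous and $H_{r_a}\subseteq H_{r_a+1}$ (the second asynchrony condition), every process in $H_{r_a+1}\supseteq H_{r_a}$ receives all the round-$r_a$ votes, which are for extensions of $\Lambda$; (ii) since $\pi<\tau=\eta$, these round-$r_a$ votes remain unexpired throughout $[r_a+1,r_a+\pi+1]$, so they sit in the latest-message set $\mathcal{M}^i_{r-1}$ that determines the round-$(r-1)$ graded-agreement output; and (iii) since the adversary is growing, a member $q\in H_{r_a}\setminus B_{r-1}$ is well-behaved throughout $[r_a,r-1]$, so by the induction hypothesis every vote $q$ casts in that window is for an extension of $\Lambda$. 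Combining (i)--(iii), the latest message of each $q\in H_{r_a}\setminus B_{r-1}$, as seen by any clique member, is a vote for an extension of $\Lambda$.

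For the inductive step, fix $r\in[r_a+1,r_a+\pi+1]$ and $p_i\in C_r$, and recall that $p_i$'s round-$r$ vote is determined by the output of the graded-agreement instance started in round $r-1$, which $p_i$ computes from $\mathcal{M}^i_{r-1}$. I would apply Clique validity from Lemma~\ref{lem:extended-ga-prop} to this instance with clique $H'=H_{r_a}\setminus B_{r-1}$: its input/$\mathcal{M}_0$ hypotheses follow from facts (i)--(iii), and the quorum hypothesis $|H'|>\frac{2}{3}|S_{r-1}\cup\mathcal{P}_0|$ follows from the first asynchrony condition $|H_{r_a}\setminus B_{r-1}|>(1-\beta)|S_{(r-1)-\tau,\,r-1}|$ together with $S_{r-1}\cup\mathcal{P}_0\subseteq S_{(r-1)-\eta,\,r-1}$ and $1-\beta=\frac{2}{3}$ (for the first step $r-1=r_a$ one instead uses the baseline $\eta$-sleepiness bound, as round $r_a$ is synchronous, and the asynchrony condition only covers indices $\ge r_a+1$). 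Clique validity then yields that $p_i$ outputs $\Lambda$ with grade~$1$. Moreover, the same vote count shows that the (at least $|H'|>\frac{2}{3}m$ many) votes for extensions of $\Lambda$ leave fewer than $\frac{1}{3}m$ votes for any log conflicting with $\Lambda$, so $p_i$ outputs \emph{no} conflicting log at any grade; hence every log $p_i$ outputs is compatible with $\Lambda$.

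It then remains to convert ``$p_i$ outputs $\Lambda$ with grade~$1$'' into ``$p_i$ votes for an extension of $\Lambda$ at round $r$,'' which splits on the parity of $r$. If $r=2v$ (round~$2$ of view $v$), then $p_i$'s vote is the longest grade-$1$ output of $GA_{v,1}$; since $\Lambda$ is output with grade~$1$ and, by Uniqueness, all grade-$1$ outputs are mutually compatible, this longest log extends $\Lambda$. If $r=2v-1$ (round~$1$ of view $v$), then $p_i$'s vote is the proposed log with the largest valid $\op{VRF}$ that does not conflict with $\mathcal{L}_{v-1}$, the longest log output by $GA_{v-1,2}$; since every output of $p_i$ is compatible with $\Lambda$ and $\Lambda$ is itself output, the longest output $\mathcal{L}_{v-1}$ must be an extension of $\Lambda$, and the proposal-selection rule makes $p_i$'s input extend $\mathcal{L}_{v-1}$, hence $\Lambda$. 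As $p_i\in H_r$ participates in the send phase, it multi-casts this vote in round $r$, closing the induction.

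I expect this last case to be the main obstacle: it is the only place where the precise semantics of the MMR proposal rule enters, and one must argue that the chosen proposal does not merely avoid conflicting with $\mathcal{L}_{v-1}$ but actually extends it (and therefore $\Lambda$). This is exactly why the grade-$0$ exclusion argument is needed: Uniqueness alone gives only ``no conflicting grade-$1$ log,'' which would leave open the possibility that a longer, conflicting log is output at grade~$0$ and becomes $\mathcal{L}_{v-1}$; the $\frac{2}{3}$-clique counting argument is what upgrades this to $\mathcal{L}_{v-1}\supseteq\Lambda$ and lets the proposal rule carry the extension forward.
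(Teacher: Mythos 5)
Your proof follows essentially the same route as the paper's: induction over the rounds of asynchrony, applying clique validity to the surviving clique $H' = H_{r_a}\setminus B_{r}$, with the quorum hypothesis supplied by the asynchrony condition $|H_{r_a}\setminus B_{r}| > (1-\beta)|S_{r-\tau,r}|$ together with $S_{r}\cup\mathcal{P}_0 \subseteq S_{r-\eta,r}$, and with the round-$r_a$ votes kept unexpired by $\pi<\tau=\eta$. The only substantive difference is that you spell out why a grade-$1$ output of $\Lambda$ forces the next vote to be for an extension of $\Lambda$ (the grade-$0$ exclusion count plus the case split on round~$1$ versus round~$2$ of a view), a step the paper's proof asserts without argument; this is a refinement of the same approach rather than a different one.
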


\begin{proof}
We prove this lemma through an inductive argument. The base case is round~$r_a + 1$. By assumption (Equation~\ref{eq:async-condition2}) we have that $H_{r_a} \subseteq H_{r_a+1}$, i.e., processes participating in the send phase of the extended graded agreement of round~$r_a$ also participate in its receive phase. In particular, seen that round~$r_a$ is synchronous by assumption, each process~$p_i \in H_{r_a}$ receives all the \textsc{vote} messages for an extension of~$\Lambda$ sent by other processes in~$H_{r_a}$ in round~$r_a$. By validity property of the extended graded agreement, every process in~$H_{r_a+1}$ outputs from GA the log~$\Lambda$ with grade~$1$, and thus multi-casts a \textsc{vote} message for an extension of it in the next instance of the extended graded agreement of round~$r_a + 1$.

For the inductive step, suppose that every process~$p_i \in H_{r_a} \cap H_r$ multi-casts a \textsc{vote} message for an extension of~$\Lambda$, for every round~$r \in [r_a + 1, r']$, with $r' < r_a + \pi + 1$. Let $H' = H_{r_a} \setminus B_{r'}$, and observe that for every $p_i \in H' \cap H_{r'+1}$, the set~$\mathcal{M}_0^i$ contains all latest unexpired \textsc{vote} messages from rounds~$< r'$ that~$p_i$ has received. In particular it contains a latest and unexpired message from each process in $H_{r_a} \setminus B_{r'}$, all from rounds no later than $r_a$. This is because messages from round~$r_a$ from $H_{r_a}$ were previously received in round $r_a$, and these are still unexpired, since $r'+1 - \eta \leq r_a + \pi + 1 - \eta < r_a$. By inductive assumptions, all such latest messages are for an extension of~$\Lambda$. It is then the case that all $p_i \in H' \cap H_{r'}$ have an extension of $\Lambda$ as input, and that, for any $p_i \in H' \cap H_{r'+1}$, $\mathcal{M}_0^i$ contains a message from each process in $H'$, also all for some extension of $\Lambda$, as required by the assumptions of clique validity. To apply clique validity, we need only to show that $|H'| > \frac{2}{3}|S_{r'} \cup P_0|$. Equation~\ref{eq:async-condition1} gives us that $|H'| = |H_{r_a} \setminus B_{r'}| > \frac{2}{3}|O_{r'-\eta, r'}|$, which immediately implies the desired result, because by construction $\mathcal{P}_0 \subset O_{r'-\eta, r'}$, so $|O_{r'} \cup \mathcal{P}_0| \le |O_{r'-\eta, r'}|$.  
\end{proof}

The following lemma describes the behavior of MMR under synchrony, which is preserved when modifying Algorithm~\ref{alg:ab-mmr} to use latest unexpired messages, as we have already argued. This result will then be used in the proof of Theorem~\ref{thm:async-res}.

\begin{lemma}
\label{lem:decided-implies-voted}
Let~$\Lambda \in D_{r_a}$ be a log decided in a round $r \le r_a$. In every round~$r' \in [r, r_a]$, every process~$p_i \in H_{r'}$ multi-casts a \textsc{vote} message for an extension of log~$\Lambda$. 
\end{lemma}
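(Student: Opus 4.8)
The plan is to treat this as the synchronous safety argument of MMR (essentially their Lemma~6), reorganized so that it only invokes the five graded-agreement properties of Definition~\ref{def:ga}. Since $r_a$ is the last round before asynchrony, every round in $[r,r_a]$ is synchronous, so by Lemma~\ref{lem:extended-ga-prop} together with the discussion preceding Theorem~\ref{thm:tob}, each extended weak graded-agreement instance run in this interval satisfies graded consistency, integrity, validity, uniqueness, and bounded divergence. I would then prove the statement by induction on $r'$ from $r$ to $r_a$, maintaining the invariant ``every $p_i\in H_{r'}$ multi-casts a \textsc{vote} for an extension of $\Lambda$.'' The whole difficulty is concentrated in the base case; the inductive step is a clean application of validity.

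For the base case, write $r=2v-1$, so the decision on $\Lambda$ (Line~\ref{alg:mmr-decide}) comes from $GA_{v-1,2}$ outputting $(\Lambda,1)$. First I would show that \emph{every} well-behaved process sets $\mathcal{L}_{v-1}\succeq\Lambda$ at Line~\ref{alg:mmr-set-l}. The key observation is that the inputs to $GA_{v-1,2}$ are each process's longest grade-$1$ output of $GA_{v-1,1}$, and by \emph{uniqueness} of $GA_{v-1,1}$ these grade-$1$ outputs are pairwise compatible, so all well-behaved inputs to $GA_{v-1,2}$ lie on a single chain. Combined with the failure bound ($|B_r|<\tilde\beta|S_r|$, so fewer than $\tfrac{m}{3}$ Byzantine votes in any process's tally), this forces every log output by $GA_{v-1,2}$ at any grade to lie on that same chain: a log conflicting with the well-behaved chain could only be supported by Byzantine votes and hence cannot reach the $>\tfrac{m}{3}$ threshold. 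Therefore the grade-$*$ outputs of $GA_{v-1,2}$ are totally ordered by $\preceq$, and since $\Lambda$ is a grade-$1$ (hence grade-$\ge0$) output for every process by graded consistency, the longest such output $\mathcal{L}_{v-1}$ extends $\Lambda$. The second half of the base case is to deduce that the log chosen as input to $GA_{v,1}$ (the \textsc{propose} log of largest valid $\op{VRF}$ not conflicting with $\mathcal{L}_{v-1}$) extends $\Lambda$: by the same chain-plus-failure-bound argument applied one round earlier, every well-behaved confirmed prefix $\mathcal{C}_{v-1}$ extends $\Lambda$, so every well-behaved \textsc{propose} log $b\|\mathcal{C}_{v-1}$ extends $\Lambda$, and in particular the largest-$\op{VRF}$ non-conflicting one does.

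The inductive step is then uniform across the two round-parities of a view. Assuming every $p_i\in H_{r'}$ votes for an extension of $\Lambda$, \emph{validity} of the $GA$ instance at round $r'$ guarantees that the longest common prefix of the well-behaved inputs---itself an extension of $\Lambda$---is output with grade~$1$, and the chain-plus-failure-bound argument again confines all grade-$*$ outputs to extensions of $\Lambda$. Consequently the quantity from which the next vote is derived (the longest grade-$1$ output fed into $GA_{\cdot,2}$ at Line~\ref{alg:mmr-input}, the updated $\mathcal{L}$ or $\mathcal{C}$, or a newly selected \textsc{propose} log) extends $\Lambda$, so every $p_i\in H_{r'+1}$ votes for an extension of $\Lambda$, closing the induction at $r_a$. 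I expect the main obstacle to be precisely the base case---turning a single grade-$1$ decision into a statement about \emph{all} well-behaved processes' $\mathcal{L}$, confirmed prefixes, and proposals---which is where uniqueness of the earlier instance and the quantitative failure bound must be combined; the inductive propagation, by contrast, rests only on validity. Since this mirrors the safety reasoning of \cite{DBLP:journals/iacr/MalkhiMR22} and uses only graded-agreement properties that Lemma~\ref{lem:extended-ga-prop} establishes for the extended primitive under synchrony, the argument transfers verbatim to the modified Algorithm~\ref{alg:ab-mmr}.
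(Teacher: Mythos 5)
Your argument takes essentially the same route as the paper's proof: induction on $r'\in[r,r_a]$, with the base case resting on graded consistency and the inductive step on validity. The paper compresses the base case into a single sentence where you spell out the uniqueness/integrity chain argument from MMR's safety lemma (their Lemma~6); that elaboration is correct in substance (though your claim of ``fewer than $\tfrac{m}{3}$ Byzantine votes in any tally'' is really the integrity property rather than a direct consequence of the failure bound), so the two proofs coincide in approach.
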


\begin{proof}
Let~$\Lambda \in D_{r_a}$ be a log decided in round $r \le r_a$ by process~$p_i$ awake at round~$r$, i.e., $p_i$ has outputs $\Lambda$ with grade~$1$ in round~$r$. We prove this result through induction on rounds $r' \in [r, r_a]$. 

The base case, i.e., $r' = r$, follows from the graded consistency property of the extended graded agreement. In particular, if an awake and well-behaved process~$p_i$ decides~$\Lambda$ in round~$r \le r_a$, then, since $r$ is a synchronous round, all processes $p_i \in H_r$ multi-cast a \textsc{vote} message for an extension of~$\Lambda$.

For the induction step, suppose that if every process in $H_{r'}$ multi-casts a \textsc{vote} message for an extension of~$\Lambda$, then (from the validity property of the extended graded agreement) every process in $H_{r'+1}$ outputs~$\Lambda$ with grade $1$. This implies that then they all multi-cast a \textsc{vote} message for an extension of~$\Lambda$.
\end{proof}

\begin{theorem}
\label{thm:async-res}
Algorithm~\ref{alg:ab-mmr} with the extended   graded agreement protocol is~$\pi$-asynchrony resilient for $\pi < \eta$.
\end{theorem}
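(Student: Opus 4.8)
The plan is to reduce both obligations in Definition~\ref{def:async-res} to a single structural fact: the processes of $H_{r_a}$ that remain uncorrupted form a $\frac{2}{3}$-super-majority that keeps voting unanimously for one log throughout $[r_a+1,r_a+\pi+1]$. Since every log in $D_{r_a}$ was decided during synchrony, Theorem~\ref{thm:tob} makes them pairwise compatible, so I let $\Lambda^\ast$ be the longest log in $D_{r_a}$ (if $D_{r_a}=\emptyset$ there is nothing to preserve); a log conflicts with $D_{r_a}$ exactly when it conflicts with $\Lambda^\ast$. I would first apply Lemma~\ref{lem:decided-implies-voted} with $\Lambda=\Lambda^\ast$ to get that every process in $H_{r_a}$ multi-casts a vote for an extension of $\Lambda^\ast$ in round $r_a$, and then feed this into Lemma~\ref{lem:clique-sticks-together} to conclude that every process in $H_{r_a}\cap H_r\setminus B_r$ votes for an extension of $\Lambda^\ast$ in every round $r\in[r_a+1,r_a+\pi+1]$.

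For the first obligation (deciders $p_j\in H_{r_a}$ during $[r_a+1,r_a+\pi+1]$), I would show such a $p_j$ cannot even grade a conflicting log at $0$. Because $p_j\in H_{r_a}\subseteq H_{r_a+1}$ and round $r_a$ is synchronous, $p_j$ received in round $r_a$ a vote for an extension of $\Lambda^\ast$ from every process of $H_{r_a}$; since $\pi<\tau=\eta$ these votes stay unexpired at every round it uses, and any newer latest message sent by a process in $H_{r_a}\setminus B_r$ is again for an extension of $\Lambda^\ast$ by the previous paragraph (using that the adversary is growing). Hence $\mathcal{M}_r^j$ contains at least $|H_{r_a}\setminus B_r|$ votes for extensions of $\Lambda^\ast$. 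The senders counted in $\mathcal{M}_r^j$ all lie in $S_{r-\tau,r}$, so $m=|\mathcal{M}_r^j|\le|S_{r-\tau,r}|$, while Equation~\ref{eq:async-condition} gives $|H_{r_a}\setminus B_r|>\frac{2}{3}|S_{r-\tau,r}|\ge\frac{2}{3}m$. Thus $p_j$ outputs $\Lambda^\ast$ with grade $1$; and since no single log can extend two conflicting logs, any log conflicting with $\Lambda^\ast$ receives fewer than $\frac{m}{3}$ votes, below even the grade-$0$ threshold. Therefore $p_j$ decides only extensions of $\Lambda^\ast$, never a log conflicting with $D_{r_a}$.

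For the second obligation (rounds after $r_a+\pi+1$, now every well-behaved process), the key new ingredient is that synchrony at round $r_a+\pi+1$ re-synchronizes even processes that slept through the asynchrony. In the receive phase of the synchronous round $r_a+\pi+1$, every well-behaved $p_j\in H_{r_a+\pi+2}$ receives all not-yet-delivered messages sent in rounds $\le r_a+\pi+1$, in particular the round-$r_a$ votes of all of $H_{r_a}$, which are still unexpired precisely because $\pi<\tau=\eta$. Repeating the count of the previous paragraph with $r=r_a+\pi+1$, every such $p_j$ — including newly awoken ones — outputs $\Lambda^\ast$ with grade $1$ and grades no conflicting log. This anchors $\Lambda^\ast$: at the first fully synchronous round every well-behaved process holds $\Lambda^\ast$ at grade $1$, so it inputs an extension of $\Lambda^\ast$ into the next graded-agreement instance and its $\mathcal{L}$/$\mathcal{C}$ do not conflict with $\Lambda^\ast$.

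From round $r_a+\pi+1$ on the $\eta$-sleepy assumptions hold, so by the correspondence established before Theorem~\ref{thm:tob} together with Lemma~\ref{lem:extended-ga-prop}, every subsequent graded-agreement instance satisfies all five graded-agreement properties. I would then invoke the MMR safety argument (Lemma 6 of~\cite{DBLP:journals/iacr/MalkhiMR22}), which depends only on these properties holding from the anchoring view onward: since $\Lambda^\ast$ is universally output with grade $1$ at round $r_a+\pi+1$, every decision taken at a round $>r_a+\pi+1$ extends $\Lambda^\ast$ and is thus compatible with $D_{r_a}$. Combining the two obligations yields $\pi$-asynchrony resilience. I expect the main obstacle to be the base case of the second obligation: one must check carefully that synchronous delivery at round $r_a+\pi+1$ restores the round-$r_a$ votes to \emph{every} well-behaved process (not only the clique $H_{r_a}$), that these votes are still unexpired — which is exactly where $\pi<\tau$ enters — and that this global grade-$1$ output is the correct hypothesis to hand off to the unmodified MMR safety proof for the synchronous tail.
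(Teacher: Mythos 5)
Your proposal is correct and follows essentially the same route as the paper's proof: Lemma~\ref{lem:decided-implies-voted} to get unanimous voting in round $r_a$, Lemma~\ref{lem:clique-sticks-together} to carry it through $[r_a+1,r_a+\pi+1]$, the counting argument via Equation~\ref{eq:async-condition} at the first synchronous round $r_a+\pi+1$ to re-anchor all of $H_{r_a+\pi+2}$ on a grade-$1$ output, and the unmodified MMR induction for the synchronous tail. Your only deviations — reducing to the single longest log $\Lambda^\ast$ and establishing the first obligation by showing conflicting logs cannot even reach the grade-$0$ threshold (the paper instead notes that a conflicting decision would force a conflicting vote, contradicting Lemma~\ref{lem:clique-sticks-together}) — are minor refinements, not a different argument.
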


\begin{proof}
Let~$\Lambda \in D_{r_a}$ be a log decided in round $r \le r_a$ and let $[r_a + 1, r_a + \pi]$ with $\pi < \eta$ be the period of asynchrony. By Lemma~\ref{lem:decided-implies-voted}, all processes in $H_{r_a}$ multi-cast a \textsc{vote} message for an extension of~$\Lambda$ in rounds~$[r, r_a]$. In particular they do so in round~$r_a$, so we can apply Lemma~\ref{lem:clique-sticks-together} and conclude that every process in $H_{r_a} \cap H_{r'}$ also multi-casts a \textsc{vote} message for an extension of~$\Lambda$ in round~$r'$, for any round  $r' \in [r_a+1, r_a + \pi+1]$. Firstly, this shows that no process $p_i \in H_{r_a}$ ever decides a log~$\Lambda'$ conflicting with~$\Lambda$ in rounds $[r, r_a + \pi + 1]$, as this would imply multi-casting a \textsc{vote} message for an extension of~$\Lambda'$. Moreover, since round~$r_a+\pi+1$ is synchronous by assumption, all \textsc{vote} messages from rounds $[r_a, r_a + \pi + 1]$ are delivered in the receive phase of the round to all well-behaved processes which are awake during it, i.e., to processes in $H_{r_a+\pi+2}$. Any such process would then have received all messages sent by processes in $H_{r_a}$ in rounds~$[r_a, r_a+\pi+1]$, which are all unexpired at round~$r_a+\pi+2$, since the expiration period for it starts at round $(r_a+\pi+2) - 1 - \eta = r_a + 1 + \pi - 
\eta \le r_a$ since $\pi < \eta$. Therefore, any process $p_i \in H_{r_a+\pi+2}$ has received an unexpired message for each process in $H_{r_a} \setminus B_{r_a + \pi + 1}$, all for extensions of~$\Lambda$, since no other messages were cast during rounds~$[r_a, r_a+\pi+1]$ by such processes. In particular, the latest of these messages is then for an extension of~$\Lambda$. Equation~\ref{eq:async-condition1} then gives us $|H_{r_a} \setminus H_{r_a+\pi+1}| > \frac{2}{3} |O_{r_a+\pi+1-\eta, r_a+\pi+1}|$ $> \frac{2}{3}$ of all latest unexpired messages seen by $p_i$ are for an extension of~$\Lambda$, and thus $p_i$ outputs~$\Lambda$ with grade 1 and multi-casts a \textsc{vote} message for an extension of it in round~$r_a+\pi+2$. Since rounds $\ge r_a+\pi+2$ are synchronous, we can then apply the same inductive reasoning of the original protocol (MMR) and conclude that all processes $H_{r'}$ multi-cast a \textsc{vote} message for an extension of~$\Lambda$ in any round $r' \ge r_a + \pi + 2$. In particular, this rules out any decision for a conflicting log in such rounds. Overall, we have shown that processes in $H_{r_a}$ never decide a log conflicting with~$D_{r_a}$, and after round $r_a + \pi+1$ no well-behaved process at all decides a log conflicting with~$D_{r_a}$, i.e., that the protocol is $\pi$-asynchrony-resilient.
\end{proof}

\begin{theorem}
Algorithm~\ref{alg:ab-mmr} with the extended graded agreement protocol heals after any period of asynchrony after $k=1$ slots.
\end{theorem}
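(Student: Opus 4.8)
The plan is to show that both Safety and Liveness (Definition~\ref{def:tob}) are recovered after round $r_a+\pi+1$; since the last asynchronous round is $r_a+\pi$, this is exactly healing with $k=1$. The whole argument reduces to two facts already available: the asynchrony resilience result (Theorem~\ref{thm:async-res}) and the original MMR safety and liveness proofs (Lemma~6 and Lemma~7 of~\cite{DBLP:journals/iacr/MalkhiMR22}), which, as established before Theorem~\ref{thm:tob}, apply verbatim to the modified algorithm in any round where the graded agreement properties of Definition~\ref{def:ga} hold.

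First I would extract the key by-product of the proof of Theorem~\ref{thm:async-res}. Let $\Lambda$ be the longest log in $D_{r_a}$. That proof shows, via Lemma~\ref{lem:clique-sticks-together}, that the honest clique $H_{r_a}\setminus B_{r'}$ keeps voting for extensions of $\Lambda$ throughout $[r_a+1,r_a+\pi+1]$, and that consequently every well-behaved process in $H_{r'}$ multi-casts a \textsc{vote} for an extension of $\Lambda$ for every $r'\ge r_a+\pi+2$. This is precisely where the modified protocol improves on plain MMR: the clique never abandons $\Lambda$ during asynchrony, so no re-convergence phase is needed and the execution from round $r_a+\pi+2$ onward is indistinguishable from a synchronous run in which $\Lambda$ is the common agreed prefix.

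Next I would check that the graded agreement properties hold in every instance started in a round $\ge r_a+\pi+2$. These rounds are synchronous and satisfy the $\eta$-sleepiness condition $|H_r|>\frac{2}{3}|S_{r-\eta,r}|$, so by the reduction preceding Theorem~\ref{thm:tob}, Lemma~\ref{lem:extended-ga-prop} yields graded consistency, integrity, validity, uniqueness, and bounded divergence for each of them. I would then invoke the MMR safety proof, using the observation (made after Proposition~\ref{prop:no-async-mmr}) that the safety of a decision from a view $v$ depends only on the graded agreement properties holding in views $\ge v$: every decision taken in a round $>r_a+\pi+1$ extends $\Lambda$, and any two such decisions are compatible, which is Safety after round $r_a+\pi+1$. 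For Liveness after round $r_a+\pi+1$, I would invoke the MMR liveness proof: the asynchronous period is a single bounded interval, so infinitely many synchronous views satisfying $\eta$-sleepiness follow it, and the VRF-based leader election eventually selects a well-behaved leader whose proposal—carrying any pending input value—is decided and then adopted by all well-behaved processes.

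The step I expect to require the most care is ruling out interference from messages produced during asynchrony. Since $\pi<\eta$, adversarial \textsc{vote} messages sent in $[r_a+1,r_a+\pi]$ are still unexpired at round $r_a+\pi+2$ and for several rounds afterward, so they enter the tally sets $\mathcal{M}_r^i$ and must be shown harmless. The resolution is dominance by honest votes: the $r=r_a+\pi+1$ instance of Equation~\ref{eq:async-condition} guarantees that the surviving clique outnumbers, by the two-thirds margin, all unexpired messages seen at round $r_a+\pi+2$, and the $\eta$-sleepiness bound $|H_r|>\frac{2}{3}|S_{r-\eta,r}|$ plays the same role in every later round. Hence the stale adversarial messages can never lift a log conflicting with $\Lambda$ past grade~$0$, and the reduction to the MMR proofs goes through unchanged.
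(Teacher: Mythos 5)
Your proposal is correct and follows essentially the same route as the paper's (much terser) proof: reduce healing to the MMR safety and liveness arguments by showing that $\eta$-sleepiness guarantees the graded agreement properties in every round after the asynchronous period, using the fact that what happened in earlier views is irrelevant. Your additional care about unexpired adversarial messages from the asynchronous period is exactly the point the paper delegates to the reduction preceding Theorem~\ref{thm:tob} (where $\mathcal{P}_0 \subseteq S_{r-\eta,r}$ absorbs those stale senders into the $\eta$-sleepiness bound), so it elaborates rather than departs from the paper's argument.
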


\begin{proof} 
Let~$r$ be the first round after asynchrony, and view~$v$ be any view whose first round is~$\ge r$. By assumption, $\eta$-sleepiness holds at all rounds of views $\ge v$, so all such rounds satisfy the graded agreement properties. Thus, all decisions made in views $\ge v$ are safe, and all proposals from well-behaved proposers made in such views have a probability~$\frac{1}{2}$ of being decided. In other words, the protocol is safe and live after round~$r$.
\end{proof}

\section{Related work}
\label{sec:related}

Pass and Shi's ``Sleepy Model"~\cite{sleepy} marked a key formalization in distributed protocols. Their work presents a significant shift in consensus protocols, formalizing the concept, initially adopted by the Bitcoin~\cite{nakamoto2008bitcoin} protocol, of participants fluctuating between being online or offline during a protocol execution. 

Momose and Ren~\cite{DBLP:conf/ccs/Momose022} present a total-order broadcast protocol that supports dynamic participation while achieving constant latency. The authors do this by extending the classic Byzantine Fault Tolerance (BFT) approach from a static quorum size to a dynamic one, adjusting according to the current level of participation. 

Another stride towards accommodating fluctuating participation was made by Malkhi, Momose, and Ren~\cite{DBLP:journals/iacr/MalkhiMR22}. This work presents a protocol with a significantly reduced latency of three rounds, which tolerates one-third malicious participants and allows fully dynamic participation of both well-behaved and malicious participants. Subsequently, Malkhi, Momose, and Ren~\cite{DBLP:conf/ccs/MalkhiM023} improve on their previous work~\cite{DBLP:journals/iacr/MalkhiMR22} by providing a dynamically available Byzantine total-order broadcast protocol under dynamic and unknown participation with an assumption of minority corruption. 

Gafni and Losa~\cite{gl} present two consensus algorithms that tolerate a ratio of~$\frac{1}{2}$ malicious failures in the the sleepy model. The first algorithm achieves deterministic safety and probabilistic liveness with constant expected latency, while the second, albeit theoretically due to its high round and message complexity, offers deterministic safety and liveness.

D'Amato and Zanolini~\cite{DBLP:journals/corr/abs-2310-11331} propose a total-order broadcast protocol in the sleepy model, designed to withstand adversarial behavior from up to 50\% of the participants. A significant advancement of their protocol is its efficiency, necessitating just one voting round for each decision, in contrast to earlier protocols that demanded multiple voting rounds per decision.

Focusing on Ethereum's consensus protocol, D'Amato and Zanolini~\cite{rlmd} tackle the challenge of tolerating periods of asynchrony in LMD-GHOST, the dynamically available component of Gasper~\cite{gasper}. The authors present RLMD-GHOST, a synchronous consensus protocol that not only ensures dynamic availability but also maintains safety during bounded periods of asynchrony. Unlike our result, which concentrates on deterministically safe, dynamically available consensus protocols, D'Amato and Zanolini~\cite{rlmd} focus on a consensus protocol that is probabilistically safe and dynamically available.

\section{Conclusions}
\label{sec:conclusions}

In this work we studied the problem of handling asynchrony in dynamically available protocols that are \emph{deterministically safe}. Our main contribution revolves around the concept of a configurable message-expiration period applied to the 1/3-resilient, dynamically available total-order broadcast protocol of Malkhi, Momose, and Ren~\cite{DBLP:journals/iacr/MalkhiMR22}. By leveraging the latest votes of participants across multiple prior rounds instead of restricting to the current round, we presented an effective mechanism to enhance the resilience of the protocol during asynchrony. In order to benefit from this approach, we introduced a ``churn rate" to quantify the maximal fraction of online participants that can transition to an offline state. We have shown that this churn rate plays a central role in determining the maximum tolerable failure ratio during synchronous operations. The techniques utilized in this work can also be directly applied to other deterministically safe, dynamically available protocols, and we leave an in-depth analysis of this for future work.

\section*{Acknowledgments}
The authors thank anonymous reviewers for interesting discussions and helpful feedback.

\bibliographystyle{plainurl}
\bibliography{references}

\end{document}